\begin{document}

\begin{frontmatter}
\title{Complexity of Deciding Detectability in Discrete Event Systems}

\author{Tom{\' a}{\v s}~Masopust}\ead{masopust{@}math.cas.cz}
\address{Institute of Mathematics, Czech Academy of Sciences, {\v Z}i{\v z}kova 22, 616 62 Brno, Czechia}

\begin{keyword} 
  Discrete event systems; Finite automata; Detectability
\end{keyword}

\begin{abstract}
  Detectability of discrete event systems (DESs) is a question whether the current and subsequent states can be determined based on observations. Shu and Lin designed a polynomial-time algorithm to check strong (periodic) detectability and an exponential-time (polynomial-space) algorithm to check weak (periodic) detectability. Zhang showed that checking weak (periodic) detectability is PSpace-complete. This intractable complexity opens a question whether there are structurally simpler DESs for which the problem is tractable. In this paper, we show that it is not the case by considering DESs represented as deterministic finite automata without non-trivial cycles, which are structurally the simplest deadlock-free DESs. We show that even for such very simple DESs, checking weak (periodic) detectability remains intractable. On the contrary, we show that strong (periodic) detectability of DESs can be efficiently verified on a parallel computer.
\end{abstract} 

\end{frontmatter}

\section{Introduction}
  The detectability problem of discrete event systems (DESs) is a question whether the current and subsequent states of a DES can be determined based on observations. The problem was introduced and studied by Shu et al.~\cite{ShuLin2011,ShuLinYing2007}. Detectability generalizes other notions studied in the literature~\cite{CainesGW1988,Ramadge1986}, such as stability of Ozveren and Willsky~\cite{OzverenW1990}. Shu et al. further argue that many practical problems can be formulated as the detectability problem for DESs.
  
  Four variants of detectability have been defined: strong and weak detectability and strong and weak periodic detectability~\cite{ShuLinYing2007}. Shu et al.~\cite{ShuLinYing2007} investigated detectability for deterministic DESs. A deterministic DES is modeled as a deterministic finite automaton with a set of initial states rather than a single initial state. The motivation for more initial states comes from the observation that it is often not known which state the system is initially in. They designed exponential algorithms to decide detectability of DESs based on the computation of the observer.
  Later, to be able to handle more problems, they extended their study to nondeterministic DESs and improved the algorithms for strong (periodic) detectability of nondeterministic DESs to polynomial time~\cite{ShuLin2011}. Concerning the complexity of deciding weak detectability, Zhang~\cite{Zhang17} showed that the problem is PSpace-complete and that PSpace-hardness holds even for deterministic DESs with all events observable.
  Shu and Lin~\cite{ShuLin2013} further extended strong detectability to delayed DESs and developed a polynomial-time algorithm to check strong detectability for delayed DESs.
  Yin and Lafortune~\cite{YinLafortune17} recently extended weak and strong detectability to modular DESs and showed that checking both strong modular detectability and weak modular detectability is PSpace-hard.

  Zhang's intractable complexity of deciding weak (periodic) detectability opens the question whether there are structurally simpler DESs for which tractability can be achieved. To tackle this question, we consider structurally the simplest deadlock-free DESs modeled as deterministic finite automata without non-trivial cycles, that is, every cycle is in the form of a self-loop in a state of the DES. We show that even for these very simple DESs, checking weak (periodic) detectability remains PSpace-complete, and hence the problem is intractable for all practical cases.

  On the other hand, we show that deciding strong (periodic) detectability of DESs is NL-complete. Since NL is the class of problems that can be efficiently parallelized~\cite{AroraBarak2009}, we obtain that strong (periodic) detectability can be efficiently verified on a parallel computer.

\section{Preliminaries and Definitions}
  For a set $A$, $|A$| denotes the cardinality of $A$ and $2^{A}$ its power set. An {\em alphabet\/} $\Sigma$ is a finite nonempty set with elements called {\em events}. A {\em word\/} over $\Sigma$ is a sequence of events of $\Sigma$. Let $\Sigma^*$ denote the set of all finite words over $\Sigma$; the {\em empty word\/} is denoted by $\varepsilon$. For a word $u \in \Sigma^*$, $|u|$ denotes its length. As usual, the notation $\Sigma^+$ stands for $\Sigma^*\setminus\{\varepsilon\}$.

  A {\em nondeterministic finite automaton\/} (NFA) over an alphabet $\Sigma$ is a structure $A = (Q,\Sigma,\delta,I,F)$, where $Q$ is a finite nonempty set of states, $I\subseteq Q$ is a nonempty set of initial states, $F \subseteq Q$ is a set of marked states, and $\delta \colon Q\times\Sigma \to 2^Q$ is a transition function that can be extended to the domain $2^Q\times\Sigma^*$ by induction. The {\em language recognized by $A$\/} is the set $L(A) = \{w\in \Sigma^* \mid \delta(I,w)\cap F \neq\emptyset\}$. Equivalently, the transition function is a relation $\delta \subseteq Q\times \Sigma \times Q$, where $\delta(q,a)=\{s,t\}$ denotes two transitions $(q,a,s)$ and $(q,a,t)$.

  The NFA $A$ is {\em deterministic\/} (DFA) if it has a unique initial state, i.e., $|I|=1$, and no nondeterministic transitions, i.e., $|\delta(q,a)|\le 1$ for every $q\in Q$ and $a \in \Sigma$. We say that a DFA $A$ over $\Sigma$ is {\em total\/} if its transition function is total, that is, $|\delta(q,a)|=1$ for every $q\in Q$ and $a\in\Sigma$. For DFAs, we identify singletons with their elements and simply write $p$ instead of $\{p\}$. Specifically, we write $\delta(q,a)=p$ instead of $\delta(q,a)=\{p\}$.

  A {\em discrete event system\/} (DES) is modeled as an NFA $G$ with all states marked. Hence we simply write $G=(Q,\Sigma,\delta,I)$ without specifying the set of marked states. The alphabet $\Sigma$ is partitioned into two disjoint subsets $\Sigma_o$ and $\Sigma_{uo}=\Sigma\setminus\Sigma_o$, where $\Sigma_o$ is the set of {\em observable events\/} and $\Sigma_{uo}$ the set of {\em unobservable events}. 
  
  The detectability problems are based on the observation of events, described by the projection $P\colon \Sigma^* \to \Sigma_o^*$. The {\em projection} $P\colon \Sigma^* \to \Sigma_o^*$ is a morphism defined by $P(a) = \varepsilon$ for $a\in \Sigma\setminus \Sigma_o$, and $P(a)= a$ for $a\in \Sigma_o$. The action of $P$ on a word $w=\sigma_1\sigma_2\cdots\sigma_n$ with $\sigma_i \in \Sigma$ for $1\le i\le n$ is to erase all events from $w$ that do not belong to $\Sigma_o$; namely, $P(\sigma_1\sigma_2\cdots\sigma_n)=P(\sigma_1) P(\sigma_2) \cdots P(\sigma_n)$. The definition can readily be extended to infinite words and languages.
  
  As usual when detectability is studied~\cite{ShuLin2011}, we make the following two assumptions on the DES $G=(Q,\Sigma,\delta,I)$:
  \begin{enumerate}
    \item\label{deadlock-free} $G$ is {\em deadlock free}, that is, for every state of the system, at least one event can occur. Formally, for every $q\in Q$, there is $\sigma \in \Sigma$ such that $\delta(q,\sigma)\neq\emptyset$.
  
    \item No loop in $G$ consists solely of unobservable events: for every $q\in Q$ and every $w \in \Sigma_{uo}^+$, $q\notin \delta(q,w)$.
  \end{enumerate}

  The set of infinite sequences of events generated by the DES $G$ is denoted by $L^\omega (G)$. Given $Q' \subseteq Q$, the set of all possible states after observing a word $t \in \Sigma_o^*$ is denoted by 
    $R(Q',t) = \cup_{w \in \Sigma^*, P(w) = t} \delta(Q',w)$.
  For $w \in L^\omega (G)$, we denote the set of its prefixes by $Pr(w)$.

  A {\em decision problem\/} is a yes-no question, such as ``Is an NFA $A$ deterministic?'' A decision problem is {\em decidable\/} if there exists an algorithm solving the problem. Complexity theory classifies decidable problems to classes based on the time or space an algorithm needs to solve the problem. The complexity classes we consider in this paper are NL, P, NP, and PSpace denoting the classes of problems solvable by a nondeterministic logarithmic-space, deterministic polynomial-time, nondeterministic polynomial-time, and deterministic polynomial-space algorithm, respectively. The hierarchy of classes is NL $\subseteq$ P $\subseteq$ NP $\subseteq$ PSpace. Which of the inclusions are strict is an open problem. The widely accepted conjecture is that all are strict. A decision problem is NL-complete (resp. NP-complete, PSpace-complete) if it belongs to NL (resp. NP, PSpace) and every problem from NL (resp. NP, PSpace) can be reduced to it by a deterministic logarithmic-space (resp. polynomial-time) algorithm.

\section{The Detectability Problems}
  In this section, we recall the definitions of the detectability problems~\cite{ShuLin2011}. Let $\Sigma$ be an alphabet, $\Sigma_o\subseteq\Sigma$ the set of observable events, and $P$ the projection from $\Sigma$ to $\Sigma_o$.
  
  \begin{defn}[Strong detectability]
    A DES $G=(Q,\Sigma,\delta,I)$ is {\em strongly detectable\/} with respect to $\Sigma_{uo}$ if we can determine, after a finite number of observations, the current and subsequent states of the system for all trajectories of the system, i.e., $(\exists n \in \mathbb{N})(\forall s \in L^\omega(G))(\forall t \in Pr(s)) |P(t)| > n \Rightarrow |R(I,P(t))| = 1$.
  \end{defn}

  \begin{defn}[Strong periodic detectability]
    A DES $G=(Q,\Sigma,\delta,I)$ is {\em strongly periodically detectable\/} with respect to $\Sigma_{uo}$ if we can periodically determine the current state of the system for all trajectories of the system, i.e., $(\exists n \in \mathbb{N})(\forall s \in L^\omega(G))(\forall t \in Pr(s))(\exists t' \in \Sigma^*) tt'\in Pr(s) \land |P(t')| < n \land |R(I,P(tt'))| = 1$.
  \end{defn}

  \begin{defn}[Weak detectability]
    A DES $G=(Q,\Sigma,\delta,I)$ is {\em weakly detectable\/} with respect to $\Sigma_{uo}$ if we can determine, after a finite number of observations, the current and subsequent states of the system for some trajectories of the system, i.e., $(\exists n \in \mathbb{N})(\exists s \in L^\omega(G))(\forall t \in Pr(s))|P(t)| > n \Rightarrow |R(I,P(t))| = 1$.
  \end{defn}

  \begin{defn}[Weak periodic detectability]
    A DES $G=(Q,\Sigma,\delta,I)$ is {\em weakly periodically detectable\/} with respect to $\Sigma_{uo}$ if we can periodically determine the current state of the system for some trajectories of the system, i.e., $(\exists n \in \mathbb{N})(\exists s \in L^\omega(G))(\forall t \in Pr(s))(\exists t' \in \Sigma^*) tt'\in Pr(s) \land |P(t')| < n \land |R(I,P(tt'))| = 1$.
  \end{defn}

  In what follows, we make often implicit use of the following lemma whose proof is obvious by definition.
  \begin{lem}\label{lem5}
    Let $G=(Q,\Sigma,\delta,I)$ be a DES and $P$ be the projection from $\Sigma$ to $\Sigma_o$. Let $P(G)=(Q,\Sigma_o,\delta',I)$ denote the DES obtained from $G$ by replacing every transition $(p,a,q)$ by $(p,P(a),q)$. Then $G$ is weak/strong (periodic) detectable with respect to $\Sigma_{uo}$ if and only if $P(G)$ is weak/strong (periodic) detectable with respect to $\emptyset$. \qed
  \end{lem}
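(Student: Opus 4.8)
The plan is to show that the transformation $G \mapsto P(G)$ preserves exactly the two ingredients on which every detectability definition depends: the state-estimation map $t \mapsto R(I,t)$ and the set of observed trajectories. Since in $P(G)$ every event is observable, its own projection is the identity map, so the defining formula for $P(G)$ with respect to $\emptyset$ is obtained from the formula for $G$ with respect to $\Sigma_{uo}$ simply by replacing every occurrence of $P$ by the identity. It therefore suffices to match, term by term, the quantities $|P(t)|$, the membership $tt'\in Pr(s)$, and $|R(I,P(t))|$ that occur in the four definitions.

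First I would record the basic transition correspondence. By construction, $G$ has a path from $p$ to $q$ labeled by a word $w \in \Sigma^*$ if and only if $P(G)$ has a path from $p$ to $q$ whose sequence of nonempty labels is $P(w)$; this is literally the same sequence of edges, with each label $a$ replaced by $P(a)$. Consequently, for every $t \in \Sigma_o^*$ the reachable-state sets coincide: $R(I,t) = \bigcup_{P(w)=t}\delta(I,w)$ equals the set of states reachable in $P(G)$ by reading $t$, where the $\varepsilon$-moves of $P(G)$ are precisely the images of the unobservable transitions of $G$. Writing $R'$ for the estimate computed in $P(G)$ under the identity projection, this gives $R(I,t)=R'(I,t)$, and in particular $|R(I,t)| = |R'(I,t)|$, for every observed string $t$.

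Second I would match the trajectories. The assumption that no loop of $G$ consists solely of unobservable events guarantees that $P(G)$ has no $\varepsilon$-cycle, so every infinite run of $G$ produces an infinite observed word and corresponds to an infinite run of $P(G)$ with the same visible label, and conversely every infinite run of $P(G)$ is the image of such a run of $G$. Under this correspondence a prefix $t$ of $s \in L^\omega(G)$ is matched with the prefix $P(t)$ of the associated run of $P(G)$, so that $|P(t)|$, the membership $tt'\in Pr(s)$, and $|R(I,P(t))|$ are all preserved. Substituting these equalities into each of the four definitions shows that the defining formula holds for $G$ with respect to $\Sigma_{uo}$ exactly when it holds for $P(G)$ with respect to $\emptyset$, which is the assertion.

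I expect the only delicate point to be the bookkeeping for the infinite trajectories: one must check that the observed-behavior correspondence is faithful in both directions, so that the existential and universal quantifiers over $s \in L^\omega(G)$ line up with those over the runs of $P(G)$, and this is exactly where the no-unobservable-loop assumption is needed (to rule out a run of $P(G)$ stalling on an $\varepsilon$-cycle and to ensure each observation sequence is genuinely realized). Everything else is a direct substitution, which justifies the paper's remark that the statement is obvious by definition.
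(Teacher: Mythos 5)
Your proof is correct and is exactly the direct definitional verification the paper has in mind: the paper gives no proof at all (it declares the statement ``obvious by definition''), and your unpacking --- the equality of the state estimates $R(I,t)$ in $G$ and in $P(G)$, together with the prefix/trajectory correspondence, with the no-unobservable-loop assumption invoked precisely to ensure every infinite run of $G$ yields an infinite observation and that no run of $P(G)$ stalls on an $\varepsilon$-cycle --- is the intended argument. No gaps.
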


\section{Complexity of Deciding Weak Detectability}
  To decide weak (periodic) detectability of a DES, Shu and Lin~\cite{ShuLin2011} construct the observer and prove that the DES is weakly detectable if and only if there is a reachable cycle in the observer consisting of singleton DES state sets, and that the DES is weakly periodically detectable if and only if there is a reachable cycle in the observer containing a singleton DES state set. Because of the construction of the observer, the algorithms are exponential. However, as pointed out by Zhang~\cite{Zhang17}, the algorithms require only polynomial space. 
  
  Zhang~\cite{Zhang17} further shows that deciding weak (periodic) detectability is PSpace-hard. His construction results in a deterministic DES with several initial states. Although the transitions are deterministic, the DES is not a DFA because of the non-unique initial state. We slightly improve Zhang's result.
  
  \begin{thm}\label{thm6}
    Deciding whether a deterministic DES over a binary alphabet is weakly (periodically) detectable is PSpace-complete.
  \end{thm}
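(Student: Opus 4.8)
The plan is to prove membership in PSpace and PSpace-hardness over a binary alphabet separately. For membership I would rely on the observer characterization recalled above: $G$ is weakly detectable iff its observer has a reachable cycle all of whose states are singletons, and weakly periodically detectable iff it has a reachable cycle containing at least one singleton. Although the observer is of exponential size, each of its states is a subset of $Q$ and hence storable in polynomial space, and the observer transition is computable in polynomial space from the description of $G$. Thus I would give a nondeterministic polynomial-space procedure that guesses a candidate singleton $S$, verifies that $S$ is reachable from the initial observer state $I$ by guessing the intermediate observer states one at a time, and then verifies that $S$ lies on an appropriate cycle: for weak periodic detectability, a nonempty path from $S$ back to $S$; for weak detectability, such a path all of whose states are singletons. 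Since NPSpace $=$ PSpace by Savitch's theorem, this yields membership in PSpace.

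For hardness, Zhang has already shown that deciding weak (periodic) detectability is PSpace-hard for deterministic DESs with all events observable, but over an alphabet $\Sigma$ of unbounded size. I would reduce the alphabet to $\{0,1\}$ by a length-preserving binary encoding. Fix $m=\lceil \log_2 |\Sigma| \rceil$ and an injective code $c\colon\Sigma \to \{0,1\}^m$ assigning to each event a distinct binary word of length $m$. The DES $G'$ is obtained from $G$ by replacing, at each state $p$, the outgoing transitions by a binary decision tree of depth $m$ that reads the code $c(a)$ bit by bit and then enters $\delta(p,a)$; branches corresponding to events undefined at $p$ are simply omitted. Fresh intermediate states are introduced per original state, so the construction keeps the transitions deterministic, introduces no deadlock (every intermediate state created lies on a path toward some defined successor), and is computable in logarithmic space.

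The heart of the argument is a correspondence between the observers of $G$ and $G'$. I would first observe that every reachable state of the observer of $G'$ has a well-defined depth modulo $m$: states reached after a multiple of $m$ bits are subsets of the original state set $Q$, while the others are subsets of the fresh intermediate states, each of which carries a fixed position within a block. Consequently every cycle in the observer of $G'$ has length divisible by $m$ and passes through a block-boundary state, i.e.\ a subset of $Q$, and reading a full block from such a state reproduces exactly one step of the observer of $G$. I would then show that $G$ has a reachable cycle of singletons (resp.\ containing a singleton) in its observer iff $G'$ does, so that $G$ is weakly (resp.\ weakly periodically) detectable iff $G'$ is. By Lemma~\ref{lem5} all events may be taken observable, so $G'$ is a deterministic DES over the binary alphabet with the same detectability status as $G$.

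The step I expect to be the main obstacle is this observer correspondence, specifically verifying that within a block the encoding never merges two distinct original trajectories and hence never creates a spurious singleton. Because the intermediate states are fresh per original state, the cardinality of an observer state stays constant while a block is read and can only drop at a block boundary; together with the depth-modulo-$m$ invariant, which forces every cycle to be block-aligned, this ensures that the singletons of the encoded observer track exactly the singletons of the original observer, in both the all-singleton (weak) and the at-least-one-singleton (weak periodic) regimes.
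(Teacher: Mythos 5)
Your route is genuinely different from the paper's. The paper does not invoke Zhang's theorem as a black box: it redoes the hardness reduction from scratch, reducing intersection nonemptiness for \emph{total} DFAs over $\{0,1\}$ (PSpace-hard by Kozen) and encoding $0\mapsto 00$, $1\mapsto 01$ inside each automaton, with two absorbing states $s_1,s_2$ arranged so that a word in the intersection collapses the reachable set to the singleton $\{s_2\}$ while an empty intersection keeps every reachable set of cardinality at least two. One reason for not citing Zhang directly is recorded in the paper's footnote: Zhang's published construction is incorrect unless the input DFAs are assumed total, so a black-box appeal to ``Zhang has already shown PSpace-hardness for deterministic DESs with all events observable'' silently inherits that defect (though the statement itself is true once repaired). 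Your generic length-$m$ binary re-encoding of the alphabet, applied on top of a corrected hardness instance, is a legitimate alternative reduction; your membership argument (nondeterministic polynomial-space search in the observer plus Savitch) is the standard one the paper simply cites.

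There is, however, a genuine gap in exactly the step you flag as the main obstacle. The claim that ``the cardinality of an observer state stays constant while a block is read and can only drop at a block boundary'' is false. Freshness of the intermediate states does guarantee that two distinct original states are never \emph{merged} mid-block, but it does not prevent a trajectory from \emph{dying} mid-block: if $S=\{p,q\}$ is a block-boundary observer state, the only event enabled at $p$ has code $00$ and the only event enabled at $q$ has code $11$, then after reading the single bit $0$ the observer state is the singleton consisting of $p$'s depth-one tree node --- a spurious mid-block singleton. So ``no merging'' is not the same as ``constant cardinality,'' and the invariant you rely on does not hold. The correspondence can still be rescued: for a block carrying event $a$ from $S_i$ to $S_{i+1}=\delta(S_i,a)$, every intermediate observer state has cardinality at most $|S_i|$ (distinct surviving states stay distinct) and at least $|S_{i+1}|$ (every $p\in S_i$ with $a$ enabled survives the entire block, and $|S_{i+1}|$ is at most the number of such $p$). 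Hence a mid-block singleton lying on a cycle forces the next block-boundary state of that cycle to be a singleton as well, and the equivalence goes through for both the all-singleton (weak) and at-least-one-singleton (weak periodic) characterizations. That monotonicity argument is what you actually need; the constancy argument you gave is wrong.
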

  \begin{proof}
    Membership in PSpace is known~\cite{Zhang17}. To show hardness, we modify Zhang's construction reducing the finite automata intersection problem: given a sequence of total\footnote{In his construction, Zhang does not assume that $A_i$ are total, which makes his construction incorrect. The assumption that $A_i$ are total fixes this minor mistake.} DFAs $A_1,\ldots, A_n$ over a common alphabet $\Sigma$, the problem asks whether $\cap_{i=1}^{n} L(A_i) \neq \emptyset$. Without loss of generality, we may assume that $\Sigma=\{0,1\}$~\cite{Kozen77}.
    
    In every $A_i=(Q_i,\{0,1\},\delta_i,q_0^i,F_i)$, we replace every transition $(p,x,q)$ by two transitions $(p,0,p')$ and $(p',x,q)$. Intuitively, we encode $0$ as $00$ and $1$ as $01$. Let $A_i'=(Q_i\cup Q_i',\{0,1\},\delta_i',q_0^i,F_i)$ denote the resulting DFA, where $Q_i' = \{p' \mid p\in Q_i\}$. Notice that no transition under event $1$ is defined in states of $Q_i$ of $A_i'$. 

    Let $G=(\cup_{i=1}^{n} (Q_i \cup Q_i') \cup\{s_1,s_2\},\{0,1\},\delta,I)$ be a DES, where $s_1$ and $s_2$ are new states, $I=\{q_0^1,\ldots,q_0^n,s_2\}$, and $\delta$ is defined as follows. If $(p,x,q)\in\delta_i$, for an $i\in\{1,\ldots,n\}$, we add $(p,x,q)$ to $\delta$. For every $p\in \cup_{i=1}^{n} F_i$, we add the transition $(p,1,s_2)$ to $\delta$, and for every $p\in \cup_{i=1}^{n} (Q_i\setminus F_i)$, the transition $(p,1,s_1)$. We add transitions $(s_i,z,s_i)$ for $i\in\{1,2\}$ and $z\in\{0,1\}$; cf. Fig.~\ref{fig7} for an illustration. Then $G$ is deterministic and total.
    \begin{figure}
      \centering
      \begin{tikzpicture}[>=stealth',->,auto,shorten >=1pt,node distance=3cm,double distance=1pt,
        state/.style={circle,minimum size=0mm,inner sep=2pt,very thin,draw=black,initial text=}]

        \node [state,accepting]   (a1)  {$p$};
        \node [state]             (a3) [left of=a1,node distance=1cm] {$p'$};
        \node []                  (a2) [above of=a3,node distance=.7cm]  {$A_i$};
        \node [state]             (d1) [below of=a1,node distance=1cm] {$q$};
        \node [state]             (d2) [left of=d1,node distance=1cm] {$q'$};
        \node [state]             (2)  [below right of=d1,node distance=2cm] {$s_1$};
        \node [state,initial]     (3)  [above right of=a1,node distance=2cm] {$s_2$};

        \node [state,accepting]   (x1) [right of=a1,node distance=3cm] {};
        \node [state]             (x3) [right of=x1,node distance=1cm] {};
        \node []                  (x2) [above of=x3,node distance=.7cm]  {$A_j$};
        \node [state]             (y1) [below of=x1,node distance=1cm] {};
        \node [state]             (y2) [right of=y1,node distance=1cm] {};
        
        \path 
          (d1) edge node {$1$} (2)
          (2)  edge[loop above] node {$0,1$} (2)
          (a1) edge node[above,pos=.4] {$1$} (3)
          (a1) edge node[above] {$0$} (a3)
          (d1) edge node[above] {$0$} (d2)
          (3)  edge[loop above] node {$0,1$} (3)
          (y1) edge node[above] {$1$} (2)
          (x1) edge node[above,pos=.4] {$1$} (3)
          (x1) edge node[above] {$0$} (x3)
          (y1) edge node[above] {$0$} (y2)
        ;

        \begin{pgfonlayer}{background}
          \path (d2.south  -| a3.west)+(-0.2,-0.1)  node (a) {};
          \path (a2.north  -| d1.east)+(0.1,0.1)  node (b) {};
          \path (y2.south  -| y1.west)+(-0.2,-0.3)  node (x) {};
          \path (x2.north  -| x3.east)+(0.2,0.2)  node (y) {};
          \path[rounded corners, draw=black] (a) rectangle (b);
          \path[rounded corners, draw=black] (x) rectangle (y);
        \end{pgfonlayer}
      \end{tikzpicture}
      \caption{The illustration of the proof of Theorem~\ref{thm6}}
      \label{fig7}
    \end{figure}
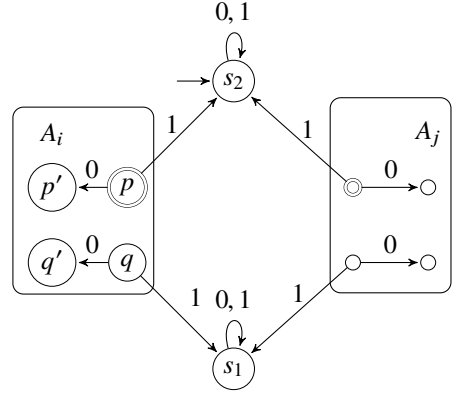

    We show that $\cap_{i=1}^{n} L(A_i)\neq\emptyset$ if and only if $G$ is weakly (periodically) detectable. If $w=a_1a_2\ldots a_m\in \cap_{i=1}^{n} L(A_i)$, then $\delta(I,0a_10a_2\ldots 0a_n 1 u) = \{s_2\}$ for every $u\in\{0,1\}^*$, and hence $G$ is weakly (periodically) detectable.
    On the other hand, if $\cap_{i=1}^{n} L(A_i)=\emptyset$, let $u\in\{0,1\}^*$. Let $u'$ be the longest prefix of $u$ of the form $(0(0+1))^*$, that is, $u'=0a_10a_2\cdots 0a_m$, for some $m\ge 0$. Then $\delta(I,u')=\cup_{i=1}^{n} \delta_i(q_0^i,a_1a_2\cdots a_m) \cup \{s_2\}=\{p_1,p_2,\ldots,p_n,s_2\}$, where $p_i\in Q_i$, and there is $i$ such that $p_i\notin F_i$. We now have three possibilities: (i) if $u=u'$, then the cardinality of $\delta(I,u)$ is $n+1$; (ii) if $u=u'0$, then $\delta(I,u)=\{p_1',p_2',\cdots,p_n',s_2\}$, where $p_i'\in Q_i'$. Again, the cardinality of $\delta(I,u)$ is $n+1$. Finally, (iii) if $u=u'1u''$, for some $u''\in\{0,1\}^*$, then $\delta(I,u'1u'')=\delta(\{p_1,\cdots,p_n,s_2\},1u'')=\{s_1,s_2\}$. In all cases, the cardinality of $\delta(I,u)$ is at least two, and hence $G$ is not weakly (periodically) detectable.
  \end{proof}
  
  \begin{cor}\label{cor7}
    Deciding whether a DES modeled as a DFA is weakly (periodically) detectable is PSpace-complete even if the DES has only three events, one of which is unobservable.
  \end{cor}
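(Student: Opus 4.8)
The plan is to reuse the deterministic DES $G$ constructed in the proof of Theorem~\ref{thm6} and to remove its only obstruction to being a DFA, namely the multiple initial states $I=\{q_0^1,\ldots,q_0^n,s_2\}$. Since the transitions of $G$ are already deterministic, it suffices to replace the set $I$ by a single initial state while faithfully reproducing the behaviour of the observer. I would do this by adding one fresh initial state $q_0$ together with one fresh event $u$ that is declared \emph{unobservable}; the event set then becomes $\{0,1,u\}$ with $\Sigma_o=\{0,1\}$ and $\Sigma_{uo}=\{u\}$, giving exactly three events, one of which is unobservable, as the statement requires.

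The role of $u$ is to spawn the original initial configuration from $q_0$ without being seen. For each $i$ I would add a fresh \emph{entry} state $e_i$ carrying the same observable out-transitions as $q_0^i$, namely $\delta(e_i,0)=(q_0^i)'$ together with $\delta(e_i,1)=s_2$ if $q_0^i\in F_i$ and $\delta(e_i,1)=s_1$ otherwise, and I would link everything by the unobservable transitions $\delta(q_0,u)=e_1$, $\delta(e_i,u)=e_{i+1}$ for $1\le i<n$, and $\delta(e_n,u)=s_2$, while setting $\delta(q_0,0)=\delta(q_0,1)=s_2$. Because $u$ is unobservable and this chain is the only source of $u$-transitions, the states reachable from $q_0$ under the empty observation are $R(\{q_0\},\varepsilon)=\{q_0,e_1,\ldots,e_n,s_2\}$, which reproduces the initial configuration $\{q_0^1,\ldots,q_0^n,s_2\}$ of $G$ (the surplus state $q_0$ collapses into the already-present sink $s_2$ after one observation). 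The resulting automaton is deterministic with the single initial state $q_0$, it is deadlock free, and since $u$ occurs only along an acyclic chain it contains no unobservable loop, so it is a legitimate DFA-modelled DES.

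With this configuration in place, the correctness argument would mirror the analysis in the proof of Theorem~\ref{thm6}, run with $R(\{q_0\},\cdot)$ in place of $\delta(I,\cdot)$: if $a_1\cdots a_m\in\bigcap_{i=1}^n L(A_i)$ then after the observation $0a_10a_2\cdots 0a_m1$ every branch has merged into $s_2$, so $R(\{q_0\},0a_1\cdots 0a_m1\,t')=\{s_2\}$ for all $t'$ and $G$ is weakly (periodically) detectable; and if $\bigcap_{i=1}^n L(A_i)=\emptyset$ then the same three-case inspection of an arbitrary observation $v$, splitting $v$ at its longest prefix of the form $(0(0+1))^*$, shows $|R(\{q_0\},v)|\ge 2$, so no observation ever pins down the state and $G$ is neither weakly nor weakly periodically detectable. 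The reduction is plainly computable in logarithmic space, and membership in PSpace is inherited from Theorem~\ref{thm6}.

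I expect the main obstacle to be the interaction between determinism and the spawning of the initial superposition. A single unobservable event cannot branch, since determinism forbids two $u$-successors of the same state, and the standing assumption that no loop consists solely of unobservable events rules out unobservable cycles; hence the $u$-reachable set of $q_0$ is forced to be a simple acyclic path along which the whole initial configuration must be threaded. The delicate point is to guarantee that the $u$-transitions fire \emph{only once}, at the very start: this is precisely why the entry states $e_i$ must be fresh copies rather than the states $q_0^i$ themselves, because an automaton $A_i$ may re-enter its start state $q_0^i$ during a run, and a $u$-transition out of $q_0^i$ would then re-spawn the chain mid-computation and destroy the singleton $\{s_2\}$ needed in the forward direction.
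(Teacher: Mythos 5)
Your construction is correct, and at the top level it is the same idea the paper uses: collapse the initial-state set $\{q_0^1,\ldots,q_0^n,s_2\}$ of the Theorem~\ref{thm6} automaton into a single initial state by threading an unobservable chain through it, so that the unobservable reach of the new initial state reproduces the old initial configuration. The difference is in where the chain is attached. The paper adds the unobservable transitions $(q_i,a,q_{i+1})$ directly between the original initial states and asserts that the two observers are identical; you instead route the chain through fresh entry states $e_1,\ldots,e_n$ that copy the observable out-transitions of the $q_0^i$ and are never re-entered. Your extra care is not merely cosmetic: in the Theorem~\ref{thm6} construction a state $q_0^i$ can be revisited during a run (take $A_i$ with a single state that is both initial and accepting), and with the chain attached to the $q_0^i$ themselves the unobservable closure then re-spawns $q_0^{i+1},\ldots,s_2$ mid-computation, which can inject $s_1$ into the post-$1$ state set and defeat the forward direction of the reduction; the paper's claim that the observers coincide fails in exactly this situation. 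With your fresh states the $u$-chain fires only under the empty observation, the observers of $G$ and of your $G'$ agree on every nonempty observation, and the three-case analysis of Theorem~\ref{thm6} transfers verbatim. So you have the intended argument, in a form that actually withstands the re-entry issue the paper glosses over.
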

  \begin{proof}
    Consider the deterministic DES $G$ constructed in the proof of Theorem~\ref{thm6} with $n+1$ initial states denoted by $q_1,\ldots,q_{n+1}$. We construct $G'$ from $G$ by adding a new unobservable event $a$, the transitions $(q_i,a,q_{i+1})$, for $i=1,\ldots,n$, and by setting $q_1$ to be the sole initial state of $G'$. All other transitions of $G'$ coincide with those of $G$. Then $G'$ is a DFA and it is easy to see that the observers of $G$ with respect to $\emptyset$ and of $G'$ with respect to $\{a\}$ are identical, and hence $G$ is weakly (periodically) detectable if and only if $G'$ is.
  \end{proof}

  The unobservable event in the corollary is unavoidable because any DES modeled as a DFA with all events observable is always in a unique state, and hence trivially detectable. We now show that two observable events are also necessary for PSpace-hardness.
  \begin{thm}\label{thm7}
    Deciding whether a DES over a unary alphabet is weakly detectable is in P, and whether it is weakly periodically detectable is in NP.
  \end{thm}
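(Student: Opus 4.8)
The plan is to reduce to the all-observable case, express everything through the single sequence $X_k := \delta(I,a^k)$, and then treat the two detectability notions separately: weak detectability by a product-graph reachability argument (giving P), and weak periodic detectability by guessing a polynomially-encoded witness and checking it by matrix powering (giving NP).

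First I would normalize the instance. Writing $\Sigma=\{a\}$, I claim we may assume $a$ is observable: if $a\in\Sigma_{uo}$, then assumption~(2) forbids every state from lying on a cycle, while deadlock-freeness forces a cycle in the finite transition digraph, so no unary DES has $a$ unobservable. Hence $\Sigma_{uo}=\emptyset$, the projection $P$ is the identity, and $R(I,a^k)=\delta(I,a^k)=X_k$. Since the only infinite word over a unary alphabet is $a^\omega$ and $G$ is deadlock-free, $L^\omega(G)=\{a^\omega\}$ with prefix set $\{a^k : k\ge 0\}$. Unwinding the definitions then gives the two clean reformulations I will use throughout: $G$ is weakly detectable iff $|X_k|=1$ for all sufficiently large $k$, and $G$ is weakly periodically detectable iff the set $S=\{k : |X_k|=1\}$ is \emph{syndetic}, i.e.\ meets every window $\{j,\dots,j+n-1\}$ for some fixed $n$.

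For weak detectability in P, I would view $\delta$ as a digraph on $Q$ and form the product digraph on $Q\times Q$, with $(p,p')\to(q,q')$ whenever $q\in\delta(p,a)$ and $q'\in\delta(p',a)$. The set of product states reachable from $I\times I$ in exactly $k$ steps is precisely $X_k\times X_k$, which contains an off-diagonal pair iff $|X_k|\ge 2$. Thus $G$ is weakly detectable iff only finitely many lengths $k$ admit an off-diagonal pair reachable from $I\times I$, and the set of such lengths is finite exactly when no off-diagonal pair is reachable by a walk containing a cycle, equivalently no off-diagonal pair is reachable from a cycle that is itself reachable from $I\times I$. This is decidable in polynomial time on the $O(|Q|^2)$-vertex product graph by standard reachability together with an SCC (cycle) computation.

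For weak periodic detectability in NP, I would prove the characterization that $G$ is weakly periodically detectable iff there exist $q\in Q$, $k\ge 0$, and $m\ge 1$ with $\delta(I,a^k)=\{q\}$ and $\delta(q,a^m)=\{q\}$. The ``if'' direction is immediate, since then $X_{k+tm}=\{q\}$ for all $t\ge 0$, so $S$ contains an arithmetic progression and is syndetic; the ``only if'' direction uses that the subset (observer) automaton over the unary alphabet is eventually periodic, so syndeticity of $S$ forces a singleton on its cycle, yielding such $q,k,m$. Because the observer has at most $2^{|Q|}$ states, whenever such $q,k,m$ exist they exist with $k,m\le 2^{|Q|}$, i.e.\ with $O(|Q|)$ bits. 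The NP algorithm therefore guesses $q$ and the binary encodings of $k$ and $m$ and verifies $\delta(I,a^k)=\{q\}$ and $\delta(q,a^m)=\{q\}$ by repeated squaring of the Boolean transition matrix, which runs in polynomial time.

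The main obstacle is the weak periodic case: the observer's period can be exponential, so it cannot be traversed explicitly, and the genuine work is (i) establishing the reachable-recurrent-singleton characterization, in particular the ``only if'' direction that syndeticity of $S$ pins a singleton onto the observer cycle, and (ii) bounding the witness lengths $k,m$ by $2^{|Q|}$ so that they are polynomially encodable and verifiable by matrix powering. The weak detectability part, by contrast, is a routine polynomial-time graph computation once the product-graph reformulation is in place.
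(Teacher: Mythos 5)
Your proof is correct. For weak periodic detectability you follow essentially the same route as the paper: both arguments exploit that the unary observer is a tail followed by a cycle with combined length at most $2^{|Q|}$, guess a binary-encoded witness, and verify it by repeated squaring of the Boolean transition matrix; your certificate $(q,k,m)$ with $\delta(I,a^k)=\{q\}$ and $\delta(q,a^m)=\{q\}$ is just a more explicit form of the paper's single guessed exponent $2^n+m$, and your syndeticity argument spells out a correctness step the paper leaves implicit. For weak detectability you genuinely diverge: the paper again powers the matrix, computing $\delta(I,a^{2^n+i})$ for $i=1,\dots,n$ and checking that all of these are singletons and that two of them coincide (using that an all-singleton observer cycle has length at most $n$), whereas you reduce to reachability and cycle detection in the product digraph on $Q\times Q$ and ask whether an off-diagonal pair is reachable from a reachable cycle. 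Your version avoids exponential exponents in the P part altogether and is in effect the twin-plant construction behind Shu and Lin's $G_{det}$ used in Theorem~\ref{thm_sdnl-c} --- which is natural here, since a unary deadlock-free DES has a single infinite trajectory and weak detectability coincides with strong detectability. Your preliminary observation that the unique event must be observable (assumption~(2) together with deadlock-freeness excludes an unobservable unary event) is a detail the paper skips over and is worth keeping.
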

  \begin{proof}
    Let $G=(Q,\{a\},\delta,I)$ be a DES with $n$ states. Then the observer of $G$ consists of a sequence of $k$ states followed by a cycle consisting of $\ell$ states, that is, the language of $G$ is $a^k (a^{\ell})^*$. Since the number of states of the observer of $G$ is at most $2^n$, $k+\ell \le 2^n$. 
    
    Now, $G$ is weakly detectable if and only if the states of the cycle of the observer of $G$ consist only of singleton states of $G$, that is, $|\delta(I,a^{k+i})|=1$ for $i=0,\ldots,\ell$ and $\ell \le n$, since the observer of $G$ has at most $n$ singleton sets. Since $\delta(I,a^{k+i})$, for $i=1,\ldots,n$, can be computed in polynomial time using the fast matrix multiplication, cf. Masopust~\cite{Masopust2018} for more details and an example, we can decide weak detectability in polynomial time. Namely, we compute $\delta(I,a^{2^n+i})$, for $i=1,\ldots,n$, check that $|\delta(I,a^{2^n+i})|=1$ and that there are $i<j$ such that $\delta(I,a^{2^n+i})=\delta(I,a^{2^n+j})$.
    
    Similarly, $G$ is weakly periodically detectable if and only if there is $m\le 2^n$ such that $|\delta(I,a^{2^n+m})|=1$. An NP algorithm can guess $m$ in binary and verify the guess in polynomial time by computing $\delta(I,a^{2^n+m})$ using the fast matrix multiplication.
  \end{proof}

  \subsection{Simplest DESs}
  Zhang's result gives rise to a question whether there are structurally simpler DESs with a tractable complexity of weak (periodic) detectability. The simplest DESs are acyclic DFAs, recognizing finite languages. Acyclic DESs are not deadlock-free. To fulfill deadlock-freeness, we consider DESs modeled as DFAs with cycles only in the form of self-loops. Such DFAs recognize a strict subclass of regular languages strictly included in {\em star-free languages}~\cite{BrzozowskiF80}. Star-free languages are languages definable by {\em linear temporal logic\/} widely used as a specification language in automated verification.

  Let $A=(Q,\Sigma,\delta,I,F)$ be an NFA. The reachability relation $\le$ on the state set $Q$ is defined by $p\le q$ if there is $w\in \Sigma^*$ such that $q\in \delta(p,w)$. The NFA $A$ is {\em partially ordered (poNFA)\/}  if the reachability relation $\le$ is a partial order. If $A$ is a DFA, we use the notation {\em poDFA}.
 
  A \emph{restricted partially ordered NFA (rpoNFA)} is a poNFA that is self-loop deterministic in the sense that the pattern of Fig.~\ref{fig_bad_pattern} does not appear. Formally, for every state $q$ and every event $a$, if $q\in \delta(q,a)$ then $\delta(q,a) = \{q\}$.
  \begin{figure}
    \centering
    \begin{tikzpicture}[baseline,->,>=stealth,auto,shorten >=1pt,node distance=2cm,
      state/.style={circle,minimum size=0mm,inner sep=2pt,very thin,draw=black,initial text=}]
      \node[state]  (a) {};
      \node[state]  (aa) [right of=a]  {};
      \path
        (a) edge[loop above] node {$a$} (a)
        (a) edge node {$a$} (aa)
        ;
    \end{tikzpicture}
    \caption{The forbidden pattern of rpoNFAs}
    \label{fig_bad_pattern}
  \end{figure}
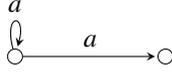
  This condition is trivially satisfied by poDFAs and it is known that poDFAs and rpoNFAs recognize the same class of languages~\cite{mfcs16:mktmmt_full}.

  We show that deciding weak (periodic) detectability remains PSpace-complete even if the DES is modeled as a poDFA, or as an rpoNFA with all events observable. Consequently, the problem is intractable for all practical cases.

  Recall that Zhang~\cite{Zhang17} obtained his result by reducing the finite automata intersection problem. Since his construction does not introduce any non-trivial cycles, it could seem that it also shows the result for poDFAs. This is, however, not the case because the complexity of the intersection problem for poDFAs is not known. Therefore, to prove our results, we need to use a different technique, namely the {\em universality problem\/} for rpoNFAs. The problem asks, given an rpoNFA $A$ over $\Sigma$, whether $L(A)=\Sigma^*$. It is PSpace-complete in general and coNP-complete if the alphabet is fixed a priori~\cite{mfcs16:mktmmt_full}.

  \begin{thm}\label{thm5}
    Deciding weak (periodic) detectability of DESs modeled as rpoNFAs is PSpace-complete even if all events are observable.
  \end{thm}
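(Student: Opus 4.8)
The plan is to establish membership and hardness separately. Membership in PSpace is immediate: rpoNFAs are a special case of NFAs, so the general polynomial-space procedure of Shu and Lin (the observer analysis recalled at the beginning of Section~4) applies. By Lemma~\ref{lem5} I may assume throughout that all events are observable, so the observer is simply the subset automaton of the rpoNFA. For hardness I would reduce from the \emph{non-universality} problem for rpoNFAs, i.e.\ given an rpoNFA $A$ over $\Sigma$, decide whether $L(A)\neq\Sigma^*$; since universality of rpoNFAs is PSpace-complete and PSpace is closed under complement, non-universality is PSpace-complete as well.

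Given $A=(Q,\Sigma,\delta,I,F)$, which I first make total in the standard language-preserving way (a fresh non-final sink with self-loops on all of $\Sigma$, which keeps $A$ an rpoNFA), I construct a DES $G$ over $\Sigma\cup\{\#\}$ with a fresh observable event $\#$. I add three fresh states: a \emph{ghost} $d$ and two self-loop \emph{sinks} $s,t$ (self-looping on every event), and set the initial states to $I\cup\{d\}$. On events of $\Sigma$, $G$ copies the transitions of $A$ and lets the ghost self-loop, $\delta(d,a)=\{d\}$. On $\#$, I send every non-final state and the ghost to $s$, and every final state to $t$:
\[
  \delta(q,\#)=\{s\}\ (q\in Q\setminus F),\quad \delta(q,\#)=\{t\}\ (q\in F),\quad \delta(d,\#)=\{s\}.
\]
One checks routinely that $G$ is a deadlock-free DES whose reachability relation is a partial order with only self-loops and which is self-loop deterministic, i.e.\ an rpoNFA, with all events observable. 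The intended effect is that reading a word of $\Sigma^*$ keeps the observer of the form $\delta(I,w)\cup\{d\}$ (of size at least two, since $A$ is total and $d$ is always present), while the single event $\#$ collapses the $A$-part and the ghost pins $s$ into the result.

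The heart of the argument is the claim that $G$ is weakly (periodically) detectable iff $A$ is non-universal, which I would prove through the Shu--Lin characterization: with all events observable, $G$ is weakly detectable iff its observer has a reachable cycle of singletons, and weakly periodically detectable iff it has a reachable cycle containing a singleton. Because $G$ is an rpoNFA, reading a letter can only keep or push states upward in the order, so the observer is again partially ordered with only self-loops; hence every observer cycle is a single self-looping subset, both notions coincide, and both amount to the existence of a \emph{reachable singleton observer state that self-loops}. I would then enumerate the reachable observer states: during the $\Sigma$-phase they are the sets $\delta(I,w)\cup\{d\}$, all of size at least two; reading $\#$ leads to $\{s\}$ when $\delta(I,w)\cap F=\emptyset$ and to $\{s,t\}$ otherwise (the ghost always contributes $s$); after $\#$ one is trapped in the absorbing sinks $\{s\}$ or $\{s,t\}$. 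Consequently the \emph{only} singleton reachable in the observer is $\{s\}$, and it is reachable exactly when some $w$ satisfies $\delta(I,w)\cap F=\emptyset$, i.e.\ exactly when $A$ rejects some word. This yields the equivalence, and since the construction is clearly log-space computable, PSpace-hardness follows.

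The main obstacle I anticipate is not the reduction itself but the faithful control of the observer: I must rule out \emph{spurious} singleton cycles that would make $G$ detectable independently of $A$. The ghost $d$ is precisely what prevents a singleton from appearing during the $\Sigma$-phase (without it, a deterministic-like $A$ could already produce singleton self-loops), while the two distinct sinks $s$ and $t$ are what let me separate ``some final state survives'' from ``all states are non-final'' after $\#$. Routing the ghost to $s$ rather than $t$ is the delicate point, since it guarantees that the all-final case yields $\{s,t\}$ instead of a bad singleton $\{t\}$. Verifying that these choices leave $\{s\}$ as the sole reachable singleton, together with the observation that rpoNFA observers admit only self-loop cycles (so that the weak and periodic variants collapse to the same condition), is where the real work lies.
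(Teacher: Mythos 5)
Your proposal is correct and follows essentially the same route as the paper: both reduce from non-universality of rpoNFAs and use a self-looping ``ghost'' initial state to block spurious singletons during the $\Sigma$-phase, plus a fresh event that routes non-marked states (and the ghost) to one absorbing sink and marked states to another. The only differences are cosmetic --- the paper merges your ghost $d$ and sink $s$ into a single state $\clubsuit$, skips the totalization of $A$, and argues directly from the definitions rather than via the observer-cycle characterization (your auxiliary claim that observer cycles of rpoNFAs are self-loops is true but not actually needed, since your enumeration of reachable observer states already settles both directions).
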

  \begin{proof}
    Membership in PSpace holds for general NFAs~\cite{Zhang17}. To show hardness, we reduce the non-universality problem for rpoNFAs. Let $A=(Q,\Sigma,\delta,I,F)$ be an rpoNFA. We construct an rpoNFA $A'=(Q\cup\{\clubsuit,\star\},\Sigma\cup\{\diamond\},\delta', I \cup \{\clubsuit\}, F)$, where $\clubsuit$ and $\star$ are new states and $\diamond$ is a new event. Let $\delta' = \delta$. We extended $\delta'$ as follows. For every non-marked state $p$ of $A$, we add the transition $(p,\diamond,\clubsuit)$ to $\delta'$, and for every marked state $p$ of $A$, we add the transition $(p,\diamond,\star)$ to $\delta'$. We add transitions $(\clubsuit,a,\clubsuit)$ and $(\star,a,\star)$ to $\delta'$ for every event $a$ of $A'$. The construction is illustrated in Fig.~\ref{fig5}. All events are observable. We show that $A$ is non-universal if and only if $A'$ is weakly (periodically) detectable.

    \begin{figure}
      \centering
      \begin{tikzpicture}[>=stealth',->,auto,shorten >=1pt,node distance=3cm,double distance=1pt,
        state/.style={circle,minimum size=0mm,inner sep=2pt,very thin,draw=black,initial text=}]

        \node []      (a1)  {Marked};
        \node []      (d1) [above of=a1, node distance=.2cm]  {};
        \node []      (e1) [below of=a1, node distance=.2cm]  {};
        \node []      (d2) [left  of=d1, node distance=.7cm]  {};
        \node []      (e2) [right of=e1, node distance=.7cm]  {};
        \node []      (a2) [below of=e1, node distance=.5cm]  {$A$ over $\Sigma$};
        \node []      (d1) [right of=a1, node distance=2.4cm] {Non-marked};
        \node []      (b1) [above of=d1, node distance=.2cm]  {};
        \node []      (c1) [below of=d1, node distance=.2cm]  {};
        \node []      (b2) [left  of=b1, node distance=1cm]   {};
        \node []      (c2) [right of=c1, node distance=1cm]   {};
        \node [state,initial below] (2)  [right of=d1,node distance=2.5cm] {$\clubsuit$};
        \node [state] (3)  [left of=a1,node distance=2.2cm] {$\star$};

        \path 
          (d1) edge node[pos=.6] {$\diamond$} (2)
          (2) edge[loop above] node {$\Sigma\cup\{\diamond\}$} (2)
          (a1) edge node[above,pos=.6] {$\diamond$} (3)
          (3) edge[loop above] node {$\Sigma\cup\{\diamond\}$} (3)
        ;

        \begin{pgfonlayer}{background}
          \filldraw [line width=4mm,join=round,black!10]
            (c2.south  -| c2.east)  rectangle (b2.north  -| b2.west);
          \filldraw [line width=4mm,join=round,black!10]
            (e2.south  -| e2.east)  rectangle (d2.north  -| d2.west);
          \path (a2.south  -| d2.west)+(-0.3,0)   node (a) {};
          \path (b1.north  -| c2.east)+(0.3,0.3)  node (b) {};
          \path[rounded corners, draw=black] (a) rectangle (b);
        \end{pgfonlayer}
      \end{tikzpicture}
      \caption{The illustration of the proof of Theorem~\ref{thm5}}
      \label{fig5}
    \end{figure}
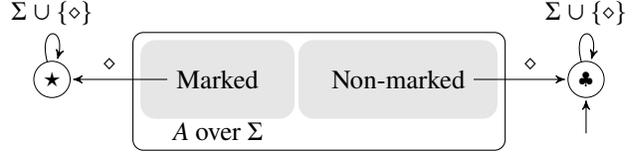

    If $A$ is not universal, there is a word $w\in\Sigma^* \setminus L(A)$ and $\delta(I,w)$ consists of non-marked states of $A$, i.e., $\delta(I,w)\cap F=\emptyset$. Then $\delta'(I\cup\{\clubsuit\},w\diamond)=\delta'(\delta(I,w)\cup\{\clubsuit\},\diamond)=\{\clubsuit\}$. Since $\delta(\clubsuit,u)=\clubsuit$ for every word $u$, we have that $A'$ is weakly (periodically) detectable.
        
    If $A$ is universal, we show that for every $w\in (\Sigma\cup\{\diamond\})^*$ the set $\delta'(I\cup\{\clubsuit\},w)$ has at least two elements, and therefore $A'$ is not weakly (periodically) detectable. If $w$ does not contain $\diamond$, then $\delta(I,w) \cap F \neq \emptyset$. Then $\delta'(I \cup \{\clubsuit\}, w) = \delta(I,w) \cup \{\clubsuit\}$ and since $\clubsuit\notin \delta(I,w)$, $|\delta(I,w) \cup \{\clubsuit\}|\ge 2$. If $w=w_1\diamond w_2$ with $w_1\in\Sigma^*$, then $\delta'(I \cup \{\clubsuit\}, w_1\diamond) = \delta'(\delta(I,w_1) \cup \{\clubsuit\}, \diamond) = \{\clubsuit,\star\}$ because $\delta(I,w_1) \cap F\neq\emptyset$ by the universality of $A$.
  \end{proof}

  We now show that intractability holds even if the DESs are modeled as poDFAs over a very small alphabet.

  \begin{thm}\label{thm9}
    Deciding weak (periodic) detectability of DESs modeled as poDFAs over the alphabet $\{0,1,\diamond,a,b\}$ with $a$ and $b$ unobservable is PSpace-complete.
  \end{thm}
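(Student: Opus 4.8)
The plan is to reduce from the non-universality problem for rpoNFAs over an unbounded alphabet, which is PSpace-complete, encoding a given rpoNFA $A$ as a poDFA $G$ over $\{0,1,\diamond,a,b\}$ so that, as in Theorem~\ref{thm5}, $A$ is non-universal if and only if $G$ is weakly (periodically) detectable. Three devices drive the construction: the observable pair $0,1$ binary-encodes the unbounded input alphabet $\Gamma$; the two unobservable events $a,b$ absorb all nondeterminism of $A$; and $\diamond$ keeps the collapsing role it had in Theorem~\ref{thm5}. Routing nondeterminism through $a,b$ is sound because, after the projection $P$ of Lemma~\ref{lem5}, these events become $\varepsilon$-moves: although $G$ is genuinely deterministic, its observer equals that of an NFA whose $\varepsilon$-transitions reproduce exactly the branching of $A$, so determinism of $G$ and nondeterminism of the observer never conflict.

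I would implement each branching transition $\delta(q,c)=\{t_1,\ldots,t_r\}$ by a deterministic observable path from $q$ that reads the code of $c$ into a fresh state $m$, followed by an acyclic binary tree of $a/b$-transitions from $m$ to $t_1,\ldots,t_r$; the several initial states are joined by an initial $a/b$-tree exactly as in Corollary~\ref{cor7}. Each tree points strictly forward in the reachability order and carries no purely unobservable loop, so both the partial order and the assumption forbidding unobservable cycles are preserved. The $\diamond$-gadget is inherited verbatim: accepting states emit $\diamond$ to a sink $\star$, every other state emits $\diamond$ to a sink $\clubsuit$, and $\clubsuit,\star$ self-loop on $\{0,1,\diamond\}$. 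Correctness then copies Theorem~\ref{thm5}: if $A$ is non-universal, the code of a word avoiding all accepting states followed by $\diamond$ drives the observer to the singleton $\{\clubsuit\}$, where it remains forever; if $A$ is universal, the initial $\clubsuit$ survives beside a simulated state or $\star$ after every observation, so the observer is never a singleton.

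The step I expect to be the main obstacle is preserving partial-orderedness under the binary encoding, and it is localised in the self-loops of $A$. A self-loop $\delta(q,c)=\{q\}$ reads $c$ and stays, but spelling out the multi-bit code of $c$ and returning to $q$ is a non-trivial cycle, which a poDFA forbids; yet these interior self-loops are exactly what lifts the problem above NP (a poDFA whose only self-loops sit at absorbing sinks has a transient part of depth at most $|Q|$, so the reachable part of its observer is explored by polynomially long words and its detectability lies in NP), so they cannot be discarded. My approach is to render a waiting state---one self-looping on every letter except a distinguished progress letter $c'$ and leaving only on $c'$---as a partially ordered pattern matcher that self-loops on individual bits while scanning for the code of $c'$ and departs $q$ only upon a complete match, like the (partially ordered) minimal DFA for ``contains a fixed factor.'' For fidelity the code of $\Gamma$ must be read forward-only and the matcher must reset to $q$ at every letter boundary not carrying $c'$; a synchronizing fixed-length code in which no codeword suffix is a codeword prefix (a comma-free code suffices) secures both, and after arranging that each waiting state scans for a single pattern---reading a short distinguishing prefix and only then branching through the $a/b$-tree when a state has several progress letters---every scanner is a single-pattern matcher and hence genuinely partially ordered and self-loop deterministic. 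Checking that these scanners, the branching trees, and the $\diamond$-gadget assemble into one poDFA over $\{0,1,\diamond,a,b\}$ is the technical heart of the proof.
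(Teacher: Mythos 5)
Your overall architecture is reversed relative to the paper, and the reversal is what creates the obstacle you flag and then fail to overcome. You reduce from non-universality of rpoNFAs over an \emph{unbounded} alphabet and binary-encode the original input letters over the observable pair $\{0,1\}$, which forces every interior self-loop $\delta(q,c)=\{q\}$ to become a multi-bit scan that must return to (the role of) $q$. Your proposed fix---a ``partially ordered pattern matcher'' in the style of the minimal DFA for ``contains a fixed factor''---does not exist in general: languages of the form $\Sigma^* w \Sigma^*$ are not $\mathcal{R}$-trivial once $|w|\ge 2$ and a partial match can fail (e.g.\ $w=001$ over $\{0,1\}$, or $w=aa$ over $\{a,b\}$: any poDFA reading $(01)^\omega$ must eventually absorb into a state with self-loops on both letters, after which it can no longer accept $(01)^k 001$, so no poDFA recognizes the language). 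The ``reset to $q$ at every letter boundary not carrying $c'$'' that you require is precisely a non-trivial cycle---you advance through the intermediate states of a codeword and come back---and a comma-free or suffix-prefix-free code does not remove it; it only prevents spurious matches across boundaries. So the technical heart of your proof, as you yourself locate it, is a gadget that cannot be built.

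The paper sidesteps this entirely by reducing from non-universality of poNFAs over the \emph{fixed binary} alphabet $\{0,1\}$ (still PSpace-complete by the cited reference), keeping $0$ and $1$ observable and untouched. Self-loops $(p,x,p)$ with $x\in\{0,1\}$ therefore remain single-letter self-loops and never need encoding. Only the \emph{new} events introduced to remove nondeterminism are encoded over the unobservable pair $\{a,b\}$: a violating pair $(p,x,p),(p,x,q)$ is repaired by re-routing the exit as $(p,x',p')(p',x,q)$ with a fresh event $x'$ and a fresh state $p'$ strictly below $p$ in the reachability order, and likewise for the remaining nondeterministic pairs; the fresh events are then spelled as words over $\{a,b\}$ along forward-only paths of fresh states, and a $\{a,b\}$-tree merges the initial states. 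Since all fresh transitions point strictly forward, partial order is preserved for free, and after projecting $\{a,b\}$ away the observer reproduces exactly the branching of the original poNFA, so the $\diamond$/$\clubsuit$/$\star$ correctness argument of Theorem~\ref{thm5} carries over verbatim. Your instinct that the unobservable letters should absorb the nondeterminism is right; the missing idea is that the observable letters must be the \emph{original} binary alphabet of the hard instance, not an encoding of a larger one.
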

  \begin{proof}
    Membership in PSpace holds for general NFAs. We show hardness by reducing the {\em non-universality problem for poNFAs}, which is PSpace-complete even if the alphabet is binary~\cite{mfcs16:mktmmt_full}. Let $A=(Q,\{0,1\},\delta,I,F)$ be a poNFA. We construct poDFA $D=(Q\cup\{\clubsuit,\star\}\cup Q',\{0,1,\diamond,a,b\},\delta',s,F)$, where $\clubsuit$ and $\star$ are new states, in the following steps. Initially, we define $\delta'=\delta$ and extend it as follows.
    
    First, for every non-marked state $p$ of $A$, we add the transition $(p,\diamond,\clubsuit)$ to $\delta'$, and for every marked state $p$ of $A$, we add the transition $(p,\diamond,\star)$. For every event $c\in\{0,1,\diamond,a,b\}$, we add the transitions $(\clubsuit,c,\clubsuit)$ and $(\star,c,\star)$; the result is similar to that illustrated in Fig.~\ref{fig5}. The result is a poNFA.
    
    Second, we convert the poNFA to an rpoNFA so that, for every state $p$ with two transitions $(p,x,p)$ and $(p,x,q)$, $p\neq q$, we replace the transition $(p,x,q)$ with two transitions $(p,x',p')$ and $(p',x,q)$, where $x'$ is a new event and $p'$ a new state; cf. Fig.~\ref{fig3} for an illustration. 
    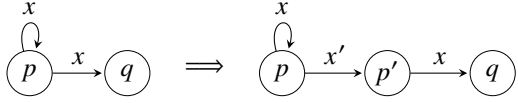
\begin{figure}
      \centering
      \begin{tikzpicture}[baseline,->,>=stealth,auto,shorten >=1pt,node distance=1.3cm,
        state/.style={circle,minimum size=6mm,inner sep=1,very thin,draw=black,initial text=}]
        \node[state]  (1) {$p$};
        \node[state]  (2) [right of=1] {$q$};
        \path
          (1) edge[loop above] node {$x$} (1)
          (1) edge node[pos=.5,sloped] {$x$} (2)
          ;
      \end{tikzpicture}
      \quad $\Longrightarrow$ \quad
      \begin{tikzpicture}[baseline,->,>=stealth,auto,shorten >=1pt,node distance=1.4cm,
        state/.style={circle,minimum size=6mm,inner sep=1,very thin,draw=black,initial text=}]
        \node[state]  (1) {$p$};
        \node[state]  (2) [right of=1] {$p'$};
        \node[state]  (3) [right of=2] {$q$};
        \path
          (1) edge[loop above] node {$x$} (1)
          (1) edge node[pos=0.5,sloped] {$x'$} (2)
          (2) edge node[pos=0.5,sloped] {$x$} (3)
          ;
      \end{tikzpicture}
      \caption{Conversion of poNFA to rpoNFA in Theorem~\ref{thm9}}
      \label{fig3}
    \end{figure}
    State $p'$ is added to $Q'$. We repeat this procedure until all such nondeterministic transitions are eliminated. The result is an rpoNFA.

    Third, we convert the rpoNFA to a poDFA as follows. For every state $p$ with two different transitions $(p,x,q)$ and $(p,x,r)$, we replace the transition $(p,x,q)$ by two transitions $(p,x',p')$ and $(p',x,q)$, where $x'$ is a new event and $p'$ a new state not in $D$; cf. Fig.~\ref{fig2} for an illustration. 
    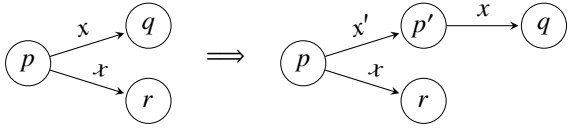
\begin{figure}
      \centering
      \begin{tikzpicture}[baseline,->,>=stealth,auto,shorten >=1pt,node distance=1.6cm,
        state/.style={circle,minimum size=6mm,inner sep=1,very thin,draw=black,initial text=}]
        \node[state]  (1) {$p$};
        \node         (0) [right of=1] {};
        \node[state]  (2) [above of=0,node distance=.5cm] {$q$};
        \node[state]  (3) [below of=0,node distance=.5cm] {$r$};
        \path
          (1) edge node[pos=.7,sloped] {$x$} (2)
          (1) edge node[pos=.4,sloped] {$x$} (3)
          ;
      \end{tikzpicture}
      \quad $\Longrightarrow$ \quad
      \begin{tikzpicture}[baseline,->,>=stealth,auto,shorten >=1pt,node distance=1.6cm,
        state/.style={circle,minimum size=6mm,inner sep=1,very thin,draw=black,initial text=}]
        \node[state]  (1) {$p$};
        \node         (0) [right of=1] {};
        \node[state]  (2) [above of=0,node distance=.5cm] {$p'$};
        \node[state]  (4) [right of=2] {$q$};
        \node[state]  (3) [below of=0,node distance=.5cm] {$r$};
        \path
          (1) edge node[pos=0.8,sloped] {$x'$} (2)
          (2) edge node[pos=0.5,sloped] {$x$} (4)
          (1) edge node[pos=0.4,sloped] {$x$} (3)
          ;
      \end{tikzpicture}
      \caption{The 'determinization' of $A'$; $x'$ and $p'$ are a new event and a new state, and hence different from those in Fig.~\ref{fig3}}
      \label{fig2}
    \end{figure}
    State $p'$ is added to $Q'$. We repeat this procedure until there is no state with two nondeterministic transitions. Because we started from an rpoNFA, $q \neq p \neq r$, and hence the newly added events do not occur in a self-loop. The initial states of this automaton are $I\cup\{\clubsuit\}$. The automaton is a deterministic DES.
    
    Let $\Gamma$ denote the set of all newly introduced events. We encode every event of $\Gamma$ as a binary word over $\{a,b\}$. To encode $|\Gamma|$ different events as binary words, it is sufficient to consider words of length $m=\lceil \log(|\Gamma|) \rceil$. Let $\text{enc}\colon \Gamma \to \{a,b\}^{m}$ be an arbitrary encoding (injective function). We replace every transition $(p,x',p')$ with $x'\in\Gamma$ by the sequence of transitions $(p,\text{enc}(x'),p')$ added to $\delta'$, which requires to add at most $m-2$ new states to $Q'$. 
    For instance, if $(p,x,p')$ and $(p,y,p'')$ are two transitions with $x,y\in\Gamma$, and $\text{enc}(x)=aab$ and $\text{enc}(y)=aba$, then the transition $(p,x,p')$ is replaced by the sequence of transitions $(p,a,p_1)$, $(p_1,a,p_2)$, $(p_2,b,p')$ added to $\delta'$, where $p_1$, $p_2$ are new states added to $Q'$, and the transition $(p,y,p'')$ is replaced by the sequence of transitions $(p,a,p_1)$, $(p_1,b,p_3)$, $(p_3,a,p'')$ added to $\delta'$ where $p_3$ is a new state added to $Q'$; cf. Fig.~\ref{fig6}. 
    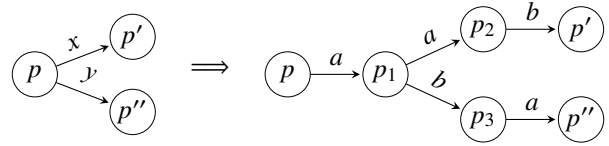
\begin{figure}
      \centering
      \begin{tikzpicture}[baseline,->,>=stealth,auto,shorten >=1pt,node distance=1.3cm,
        state/.style={circle,minimum size=6mm,inner sep=1,very thin,draw=black,initial text=}]
        \node[state]  (1) {$p$};
        \node         (0) [right of=1] {};
        \node[state]  (2) [above of=0,node distance=.5cm] {$p'$};
        \node[state]  (3) [below of=0,node distance=.5cm] {$p''$};
        \path
          (1) edge node[pos=.7,sloped] {$x$} (2)
          (1) edge node[pos=.2,sloped] {$y$} (3)
          ;
      \end{tikzpicture}
      \quad $\Longrightarrow$ \quad
      \begin{tikzpicture}[baseline,->,>=stealth,auto,shorten >=1pt,node distance=1.3cm,
        state/.style={circle,minimum size=6mm,inner sep=1,very thin,draw=black,initial text=}]
        \node[state]  (1) {$p$};
        \node[state]  (2) [right of=1] {$p_1$};
        \node (0) [right of=2] {};
        \node[state]  (3) [above of=0,node distance=.6cm] {$p_2$};
        \node[state]  (4) [below of=0,node distance=.6cm] {$p_3$};
        \node[state]  (5) [right of=3] {$p'$};
        \node[state]  (6) [right of=4] {$p''$};
        \path
          (1) edge node {$a$} (2)
          (2) edge node[pos=.8,sloped] {$a$} (3)
          (2) edge node[pos=.2,sloped] {$b$} (4)
          (3) edge node {$b$} (5)
          (4) edge node {$a$} (6)
          ;
      \end{tikzpicture}
      \caption{The encoding $\text{enc}(x)=aab$ and $\text{enc}(y)=aba$}
      \label{fig6}
    \end{figure}

    To obtain a single initial state, we proceed as follows. Let $q_1,\ldots,q_{n}$ denote the states of $I\cup\{\clubsuit\}$. Let $m=\lceil\log(n)\rceil$. We construct a binary tree of depth $m$ over $\{a,b\}$ and add it as depicted in Fig.~\ref{fig4}. The number of leaves of the tree is $2^m$ and the number of nodes of the tree is $2^{m+1}-1 = O(n)$. Let the leaves be denoted by $1,2,\ldots,2^m$. We add the transitions $(i,a,q_i)$ for $1\le i \le n$. The states of the tree are denoted by $Z$. The resulting automaton is a poDFA with polynomially many new events and states, and a single initial state $s$.
    \begin{figure}
      \centering
      \begin{tikzpicture}[baseline,auto,->,>=stealth,shorten >=1pt,node distance=1.5cm,
        state/.style={ellipse,minimum size=4mm,inner sep=0pt,very thin,draw=black,initial text=},
        every node/.style={font=\small}]
        \node[state]  (1) {$q_1$};
        \node[state]  (2) [below of=1,node distance=.5cm]  {$q_2$};
        \node[state]  (5) [below of=2,node distance=.5cm]  {$q_3$};
        \node[state]  (6) [below of=5,node distance=.5cm]  {$q_4$};
        \node[]       (3) [right of=1,node distance=1cm]  {};
        \node[state]  (10) [left of=1] {$1$};
        \node[state]  (11) [left of=2] {$2$};
        \node[state]  (12) [left of=5] {$3$};
        \node[state]  (13) [left of=6] {$4$};
        \node         (y) at ($(10)!0.5!(11)$) {};
        \node         (z) at ($(12)!0.5!(13)$) {};
        \node[state]  (20) [left of=y] {};
        \node[state]  (21) [left of=z] {};
        \node         (x) at ($(20)!0.5!(21)$) {};
        \node[state,initial] (30) [left of=x] {$s$};
        \node[]       (z) [above of=30,node distance=.8cm] {$Z$};
        \path
          (30) edge[pos=.7,sloped] node {{\tiny $a$}} (20)
          (30) edge[pos=.3,sloped] node {{\tiny $b$}} (21)
          (20) edge[pos=.8,sloped] node {{\tiny $a$}} (10)
          (20) edge[pos=.5,sloped] node {{\tiny $b$}} (11)
          (21) edge[pos=.8,sloped] node {{\tiny $a$}} (12)
          (21) edge[pos=.5,sloped] node {{\tiny $b$}} (13)
          (10) edge[pos=.5,sloped] node {{\tiny $a$}} (1)
          (11) edge[pos=.5,sloped] node {{\tiny $a$}} (2)
          (12) edge[pos=.5,sloped] node {{\tiny $a$}} (5)
          (13) edge[pos=.5,sloped] node {{\tiny $a$}} (6)
          ;
        \begin{pgfonlayer}{background}
          \path (1.north -| 3.east) + (0.1,0.1)     node (a) {};
          \path (6.south -| 1.west) + (-0.2,-0.1)   node (b) {};
          \path[rounded corners, draw=black] (a)    rectangle (b);
          \path (10.north -| 10.east) + (0.1,0.1)   node (a) {};
          \path (13.south -| 30.west) + (-0.2,-0.1) node (b) {};
          \path[rounded corners, draw=black] (a)    rectangle (b);
        \end{pgfonlayer}
      \end{tikzpicture}
      \caption{Construction of a single initial state illustrated for $4$ initial states}
      \label{fig4}
    \end{figure}
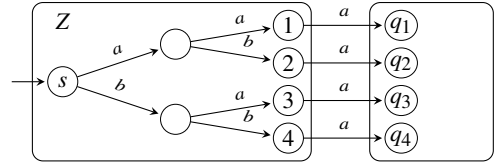

    Let $D$ be the resulting automaton. Then $D$ is a poDFA over the alphabet $\{0,1,\diamond,a,b\}$ of polynomial size with respect to the size of $A$ (i.e., the number of states, events, and transitions of $A$).
    Let $P$ be the projection from $\{0,1,\diamond,a,b\}$ to $\{0,1,\diamond\}$.
    Let $P(D)$ denote the poNFA obtained from $D$ by replacing every transition $(p,a,q)$ by $(p,P(a),q)$, and let $\delta''$ denote its transition relation. By Lemma~\ref{lem5}, $D$ is weakly (periodically) detectable with respect to $\{a,b\}$ if and only if $P(D)$ is weakly (periodically) detectable with respect to $\emptyset$. We show that $P(D)$ is weakly (periodically) detectable if and only if $A$ is not universal.

    If $A$ is not universal, there is $w\in\{0,1\}^*\setminus L(A)$ and $\delta(I,w)$ contains no marked state of $A$. Then $\delta''(I\cup\{\clubsuit\},w) = \delta(I,w) \cup Y \cup \{\clubsuit\}$ where $Y\subseteq Q'$ since every unobservable transition reachable from $I$ ends in a state from $Q'$. Then $\delta''(s,w\diamond) = \delta''(I\cup\{\clubsuit\},w\diamond) = \delta''(\delta(I,w)\cup Y\cup \{\clubsuit\},\diamond)=\{\clubsuit\}$ because $\delta''(Y,\diamond)=\emptyset$. Since $\delta(\clubsuit,u)=\clubsuit$ for every word $u$, $P(D)$ is weakly (periodically) detectable.
    
    If $A$ is universal, we show that for every $w\in \{0,1,\diamond\}^*$ the set $\delta''(s,w)$ has at least two elements, which shows that $P(D)$ is not weakly (periodically) detectable. Thus, if $\diamond$ does not occur in $w$, then $\delta(I,w)\cap F\neq\emptyset$. Since $\delta(I,w)\cup\{\clubsuit\}\subseteq \delta''(I\cup\{\clubsuit\},w) \subseteq \delta''(s,w)$ and $\clubsuit\notin \delta(I,w)$, $|\delta(I,w)\cup\{\clubsuit\}|\ge 2$. If $w=w_1\diamond w_2$ with $w_1\in\{0,1\}^*$, then $\delta(I,w_1) \cap F \neq \emptyset$ by the universality of $A$. Therefore, $\{\clubsuit,\star\}=\delta''(I \cup \{\clubsuit\},w_1\diamond) \subseteq \delta''(s,w)$.
  \end{proof}

\section{Complexity of Deciding Strong Detectability}
  Shu and Lin~\cite{ShuLin2011} designed a polynomial-time algorithm to decide strong (periodic) detectability, and hence the problem is in P. Is the problem P-complete or does it belong to NL? This question asks whether the problem can be efficiently parallelized (is in NL) or not (is P-complete)~\cite{AroraBarak2009}. We show that the problem is NL-complete and can thus be efficiently solved on a parallel computer.
  
  \begin{thm}\label{thm_sdnl-c}
    Deciding whether a DES is strongly (periodically) detectable is NL-complete.
  \end{thm}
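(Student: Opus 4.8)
The plan is to establish membership in NL and NL-hardness separately. Since NL is closed under complement (Immerman--Szelepcs\'enyi), it suffices to place the \emph{negations} of strong and strong periodic detectability in NL, and for hardness to reduce an NL-complete problem to one of the properties or its complement. The tool for both detectability notions is the polynomial-size \emph{self-product} $H$ with state set $Q\times Q$, initial pairs $I\times I$, and transitions that synchronize the two coordinates on a common \emph{observable} event, $(p,q)\to(p',q')$ whenever $p'\in\delta(p,a)$ and $q'\in\delta(q,a)$ for $a\in\Sigma_o$, while allowing either coordinate alone to advance on an unobservable event. Call a pair \emph{split} if its coordinates differ. The observation of a run of $H$ is the sequence of observable events it synchronizes on; the key elementary fact is that $|R(I,w)|\ge 2$ holds exactly when some split pair is reachable in $H$ by a run with observation $w$. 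By the second standing assumption there is no loop of unobservable events in $G$, so $H$ has no cycle whose labels all vanish under $P$, and hence every cycle of $H$ strictly increases the observation length.

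For membership I would use the following characterizations, both phrased in the graph $H$ whose adjacency is decidable in logarithmic space from the input. The DES $G$ is \emph{not} strongly detectable iff arbitrarily long observations still leave a split estimate, which by the pumping property above is equivalent to the existence of a pair reachable from $I\times I$ that lies on a cycle and from which a split pair is reachable. Similarly, $G$ is \emph{not} strongly periodically detectable iff some infinite trajectory keeps a split estimate on arbitrarily long windows, which I would show equivalent to the existence of a split pair reachable from $I\times I$ and lying on a cycle consisting only of split pairs. Each condition is a fixed conjunction of reachability and cycle-detection queries in $H$, hence computable in NL; applying NL $=$ coNL yields strong and strong periodic detectability in NL.

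For hardness I would reduce from reachability in a directed \emph{acyclic} graph with a source $s$ and a sink $t$, which is NL-complete (the layered/time-expanded graph makes general reachability acyclic in logspace). Given such a DAG $\mathcal{G}=(V,E)$, I build a DES $G$ over the single observable event $o$ with states $V\cup\{r\}$, initial states $\{s,r\}$, an $o$-transition $u\to v$ for each $(u,v)\in E$, a self-loop on $o$ at $t$, a self-loop on $o$ at the fresh absorbing state $r$, and an $o$-transition $z\to r$ from every sink $z\ne t$ of $\mathcal{G}$. This $G$ is deadlock-free, has no unobservable events, and is logspace-computable, and after reading $o^{k}$ its estimate is $\{v : v \text{ is reached from } s \text{ by a walk of length } k\}\cup\{r\}$. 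If $t$ is unreachable from $s$ then, as $\mathcal{G}$ is acyclic, every walk from $s$ funnels into $r$ within $|V|$ steps, so the estimate equals $\{r\}$ for all $k\ge|V|$ and $G$ is both strongly and strongly periodically detectable; if $t$ is reachable, the self-loop at $t$ keeps $t$ in every sufficiently long estimate while $r$ is always present, so the estimate stays of size at least two forever and $G$ is neither. Thus $G$ is strongly (periodically) detectable iff $t$ is unreachable from $s$, and NL-hardness of both problems follows.

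The main obstacle is the periodic membership characterization: turning a merely non-singleton recurrent behaviour into a cycle of $H$ consisting \emph{entirely} of split pairs. I would obtain this by running $H$ along the periodic observation that witnesses the failure of periodic detectability and using that the estimate never collapses to a singleton, so two distinct tokens can be maintained indefinitely; a cycle of split pairs is then extracted by finiteness (pigeonhole, or K\"onig's lemma on the infinite run). A secondary point requiring care is that the reduction must not create spurious recurrent ambiguity unrelated to $t$, which is precisely why the source graph is taken acyclic and why the only loops of $G$ are placed at $t$ and at the absorbing state $r$.
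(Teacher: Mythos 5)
Your hardness argument is essentially the paper's: the paper also reduces DAG (non-)reachability by adding an absorbing trap state, a self-loop at $t$, and edges into the trap (it routes \emph{every} non-$t$ vertex to the trap and keeps a single initial state, whereas you use two initial states and route only the sinks, but both variants work and are logspace). Your membership plan --- a polynomial-size pair structure, two reachability-plus-cycle conditions, and NL $=$ coNL --- is also the paper's plan, which it realizes by citing Shu and Lin's detector $G_{det}$.

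However, there is a genuine gap in your membership argument for strong \emph{periodic} detectability: the coordinate-wise self-product $H$ is the wrong object. In $H$ a pair $(p,q)$ advances only if \emph{each} coordinate has a successor under the observed event, so a token that dies cannot be replaced by re-splitting the surviving token. Concretely, take $Q=I=\{p,q\}$, both events $a,b$ observable, $\delta(p,a)=\{p,q\}$, $\delta(q,b)=\{q\}$, and no other transitions. This DES satisfies both standing assumptions, and along the trajectory $a^{\omega}$ every estimate equals $R(I,a^{k})=\{p,q\}$, so the system is \emph{not} strongly periodically detectable. Yet in your $H$ the split pairs $(p,q)$ and $(q,p)$ have no outgoing transitions at all ($q$ has no $a$-successor and $p$ has no $b$-successor), so no cycle consisting only of split pairs exists, and your characterization wrongly declares the system strongly periodically detectable. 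This is exactly where your pigeonhole/K\"onig sketch breaks: two distinct tokens cannot always be maintained \emph{coordinate-wise}. The repair is to use the detector $G_{det}$, whose transitions go from a set $S$ of cardinality at most two to all cardinality-$\le 2$ subsets of the union $\delta(S,a)$, so both elements of a successor pair may descend from the same element of $S$; with that automaton your two cycle conditions and the NL procedures refuting them are precisely the paper's. Your characterization of non-strong-detectability via $H$ is fine, since there one only needs a split pair to be \emph{reachable} after a cycle, not splitness to persist along the cycle.
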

  \begin{proof}
    For a DES $G$, Shu and Lin~\cite{ShuLin2011} construct an NFA $G_{det}$ of polynomial size whose states are subsets of states of $G$ of cardinality one or two (except for the initial state), such that $G$ is strongly detectable if and only if 
    (a) any state reachable from any loop in $G_{det}$ is of cardinality one;
    and strongly periodically detectable if and only if 
    (b) all loops in $G_{det}$ include at least one state of cardinality one. 

    We prove the membership in NL by showing that checking that the conditions do not hold is in NL. Since NL is closed under complement~\cite{Immerman88,Szelepcsenyi87}, checking that the conditions are satisfied is also in NL. 
    
    To check that (a) is not satisfied, the NL algorithm guesses two states of $G_{det}$, $x$ and $y$, where $y$ is of cardinality two, and verifies that
      $y$ is reachable from $x$, 
      $x$ is reachable from the initial state of $G_{det}$, and 
      $x$ is in a cycle, i.e., $x$ is reachable from $x$ by a path having at least one transition.
    
    To check that (b) is not satisfied, the NL algorithm guesses a state $x$ of $G_{det}$ of cardinality two and verifies that 
      $x$ is reachable from the initial state of $G_{det}$ and 
      that $x$ is in a cycle consisting only of states of cardinality two.
    
    For more details how to check reachability in NL, the reader is referred to the literature~\cite{Masopust2018}.
    
    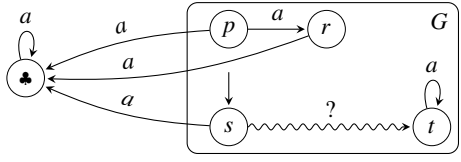
\begin{figure}
      \centering
      \begin{tikzpicture}[baseline,auto,->,>=stealth,shorten >=1pt,node distance=1.3cm,
        state/.style={ellipse,minimum size=5mm,inner sep=1pt,very thin,draw=black,initial text=},
        every node/.style={font=\small}]
        \node[state,initial above]  (1) {$s$};
        \node[state]          (2) [above of=1]  {$p$};
        \node                 (c) at ($(1)!0.5!(2)$) {};
        \node[state]          (5) [right of=2]  {$r$};
        \node[state]          (3) [right of=1,node distance=2.7cm]  {$t$};
        \node[state]          (4) [left of=c,node distance=2.7cm]  {$\clubsuit$};
        \node at (2.8,1.4) {$G$};
        \path
          (2) edge node[sloped,above] {$a$} (5)
          (1) edge[bend left=10] node[pos=0.5,sloped,above] {$a$} (4)
          (2) edge[bend right=10] node[pos=0.5,sloped,above] {$a$} (4)
          (4) edge[loop above] node[pos=0.5,sloped,above] {$a$} (4)
          (3) edge[loop above] node[pos=0.5,sloped,above] {$a$} (3)
          (5) edge[bend left=11] node[pos=0.7,sloped,above] {$a$} (4)
          (1) edge[style={decorate, decoration={snake,amplitude=.4mm,segment length=1.7mm,post length=1.3mm}}] node{?} (3) ;
          ;
        \begin{pgfonlayer}{background}
          \path (2.north -| 3.east) + (0.1,0.1)    node (a) {};
          \path (1.south -| 1.west) + (-0.3,-0.1)  node (b) {};
          \path[rounded corners, draw=black] (a) rectangle (b);
        \end{pgfonlayer}
      \end{tikzpicture}
      \caption{The DES $A$ from the NL-hardness proof of Theorem~\ref{thm_sdnl-c}}
      \label{fig1}
    \end{figure}

    To show NL-hardness, we reduce the {\em DAG non-reachability problem}~\cite{ChoH91}. Given a directed acyclic graph $G=(V,E)$ and nodes $s,t\in V$, it asks whether $t$ is not reachable from $s$. 
    
    From $G$, we construct a DES $A=(V\cup\{\clubsuit\},\{a\},\delta,s)$, where $\clubsuit\notin V$ is a new state and $a$ is an observable event. For every $(p,r)\in E$, we add the transition $(p,a,r)$ to $\delta$, and for every $p\in V\setminus\{t\}$, we add the transition $(p,a,\clubsuit)$ to $\delta$. Moreover, we add the self-loop transitions $(\clubsuit,a,\clubsuit)$ and $(t,a,t)$ to $\delta$. The construction is depicted in Fig.~\ref{fig1}. Notice that $A$ is deadlock-free and has no unobservable events. We now show that $t$ is not reachable from $s$ in the graph $G$ if and only if the DES $A$ is strongly (periodically) detectable.

    If node $t$ is not reachable from $s$ in $G$, then, for every $k\ge |V|$, $\delta(s,a^k)=\{\clubsuit\}$. Hence $A$ is strongly (periodically) detectable. 
    
    If $t$ is reachable from $s$, then, for every $k\ge |V|$, $\delta(s,a^k)=\{t,\clubsuit\}$. Hence $A$ is not strongly (periodically) detectable.
  \end{proof}

  We point out that using a unique event for every transition can show NL-hardness for DESs modeled as DFAs.

\section{Conclusions}
  We studied the complexity of deciding detectability of discrete event systems modeled as finite automata and showed that deciding weak (periodic) detectability is intractable for all deadlock-free DESs. On the other hand, we showed that strong (periodic) detectability can be decided efficiently on a parallel computer.


\bibliographystyle{plain}
\bibliography{mybib}

\end{document}